\documentclass{article}
\usepackage{authblk}

\usepackage[T1]{fontenc}
\usepackage[utf8x]{inputenc}
\usepackage[english]{babel}

\usepackage{mathtools,amssymb,amsthm}

\usepackage{enumitem,multirow}

\renewcommand{\ge}{\geqslant}
\renewcommand{\leq}{\leqslant}
\renewcommand{\geq}{\geqslant}
\renewcommand{\phi}{\varphi}

\newcommand{\R}{\mathbb{R}}

\newcommand{\W}{\mathcal{W}}
\newcommand{\Lo}{\mathcal{L}}

\newcommand{\0}{\boldsymbol{0}}
\newcommand{\1}{\boldsymbol{1}}

\newcommand{\ip}[2]{\langle{#1},{#2}\rangle}

\newcommand{\conv}{\mathrm{conv}}
\newtheorem{thm}{Theorem}

\newtheorem{conj}[thm]{Conjecture}
\theoremstyle{definition}

\newtheorem{rem}[thm]{Remark}
\newtheorem{ex}[thm]{Example}

\makeatletter
\newtheorem*{rep@theorem}{\rep@title}
\newcommand{\newreptheorem}[2]{%
\newenvironment{rep#1}[1]{%
 \def\rep@title{#2 \ref{##1}}%
 \begin{rep@theorem}}%
 {\end{rep@theorem}}}
\makeatother

\newreptheorem{thm}{Theorem}

\pagestyle{plain}

\pretolerance=10000

\begin{document}

\title{On Critical Threshold Value for Simple Games}

\author{Kanstantsin Pashkovich}

\affil{Department of Combinatorics and Optimization,\\ University of Waterloo,\\ 200 University Avenue West
Waterloo, ON, Canada N2L 3G1\\
  \texttt{kpashkov@uwaterloo.ca}}

\date{}

\maketitle

\begin{abstract}
In this note, we show  that for every simple game with $n$ players the critical threshold value is at most $n/4$.  This verifies the conjecture of Freixas and Kurz. 
\end{abstract}

\section{Introduction}

Let $N$ be a finite set of players. We call a function $v:2^N\rightarrow \{0,1\}$ \emph{monotone} if $v(C)\leq v(S)$ for all $C,S \subseteq N$ such that $C\subseteq S$. The pair $(N,v)$ is called a \emph{simple game} if $v(\varnothing)=0$, $v(N)=1$ and $v$ is a monotone $0/1$ function. We refer the reader to~\cite{vNM44},~\cite{Shapley62} for further reading on simple games. A simple game $(N,v)$ partitions the collection of all possible player coalitions into two collections: the \emph{collection of winning coalitions} $\W:=\{C\subseteq N\,:\, v(C)=1\}$ and the \emph{collection of losing coalitions} $\Lo:=\{C\subseteq N\,:\, v(C)=0\}$.

\emph{Weighted voting games} are a natural family of simple games. A weighted voting game is defined by a finite set of players $N$ and a vector $p\in \R^N$, $p\geq \0$, $p(N)\geq 1$, where
\[
v(C):=
\begin{cases}
1 &\text{if}\quad p(C)\geq 1\\
0 &\text{otherwise}\,.
\end{cases}
\]
Here, we use the notation $q(C):=\sum_{i\in C}q_i$ for a vector $q\in \R^N$ and $C\subseteq N$.

Clearly, every weighted voting game is a simple game. To show that the reverse is not true let us consider the following example from~\cite{FK14}. 
\begin{ex}\label{example}
Let $N=\{1,2,\ldots,n\}$ for some even $n$ and the value function $v:2^N\rightarrow \{0,1\}$ be as follows
\[
v(C):=
\begin{cases}
1 &\text{if}\quad\{2i-1,2i \}\subseteq C \quad \text{for some}\quad i\in \{1,\ldots,n/2\}\\
0 &\text{otherwise}\,.
\end{cases}
\]
Obviously, coalitions $\{2i-1, 2i\}$, $i\in \{1,\ldots,n/2\}$ are winning while the two coalitions $\{1,3,\ldots, n-1\}$, $\{2,4,\ldots, n\}$ are losing. 
\end{ex}
If the desired vector $p\in \R^N$ exists for Example~\ref{example}, then on one side $p(N)\geq n/2$ and on the other side $p(N)<2$,  showing that for $n\geq 4$ this game is not a weighted voting game.

To understand whether a simple game is a weighted voting game, we could use the \emph{critical threshold value} introduced in~\cite{GHS13}. Before we define the critical threshold value of a simple game, let us define the following polyhedron
\[
Q(\W):=\{x\in\R^N\,:\, x(C) \ge \1 \text{  for  } C\in\W\,,\quad x\ge \0\}\,.
\]
The \emph{critical threshold value} can be defined as
\[
	\alpha=\alpha(N,v):=\min_{p\in Q(\W)} \max_{C\in \Lo} p(C)\,.
\]
Observe, that $\alpha<1$ if and only if the simple game $(N,v)$ is a weighted voting game.
 
The example in~\cite{FK14} shows that $\alpha$ can be as large as $n/4$, because $\frac{2}{n}\1$ lies in the convex hull of the characteristic vectors of winning coalitions while $\frac{1}{2}\1$ lies in the convex hull of the characteristic vectors of losing coalitions. Freixas and Kurz~\cite{FK14}  conjectured that there is no simple game with a larger value of $\alpha$. Here, we state the variant of the conjecture of Freixas and Kurz from~\cite{HKKP18}.

\begin{conj}[Conjecture of Freixas and Kurz]\label{conj}
For a simple game with $n$ players, the collection of winning coalitions $\W$ and the collection of losing coalitions $\Lo$, we have
\[
\alpha=\min_{p\in Q(\W)} \max_{C\in \Lo} p(C)\leq n/4\,.
\]
\end{conj}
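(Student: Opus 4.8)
The critical threshold is a minimax quantity over the polyhedron $Q(\W)$, so the natural move is LP duality / von Neumann minimax. The inner maximization $\max_{C \in \Lo} p(C)$ equals $\max_{y} \sum_{C \in \Lo} y_C\, p(C)$ over probability distributions $y$ on $\Lo$, i.e. $\max_y \ip{p}{z}$ where $z = \sum_C y_C \chi_C$ ranges over $\conv\{\chi_C : C \in \Lo\}$. Thus
\[
\alpha = \min_{p \in Q(\W)}\ \max_{z \in \conv\{\chi_C : C\in\Lo\}}\ \ip{p}{z}.
\]
I would try to swap the min and max (the feasible sets are convex, the objective bilinear, and one can truncate $Q(\W)$ to a bounded region without changing the value since losing coalitions give bounded payoffs), getting
\[
\alpha = \max_{z}\ \min_{p \in Q(\W)}\ \ip{p}{z},
\]
where $z$ ranges over the losing polytope $L := \conv\{\chi_C : C \in \Lo\}$. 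For a fixed $z \in L$, the quantity $\min_{p \in Q(\W)} \ip{p}{z}$ is itself an LP whose dual is a fractional covering problem over winning coalitions: it equals $\max\{\sum_{C\in\W}\lambda_C : \sum_{C\in\W}\lambda_C\,\chi_C \le z,\ \lambda \ge 0\}$, the maximum "fractional weight" of a packing of winning coalitions dominated coordinatewise by $z$.

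**Key steps in order.** First, set up the duality above so that $\alpha = \max_{z \in L} g(z)$ where $g(z)$ is the fractional winning-coalition packing number of the vector $z$. Second, fix an optimal $z^\star$ and let $C_1, C_2, \dots \in \Lo$ be the losing coalitions in its support, so $z^\star = \sum_j y_j \chi_{C_j}$ with $\sum_j y_j = 1$. Third — the combinatorial heart — exploit the simple-game structure: since each $C_j$ is losing and $v$ is monotone, for any winning coalition $W$ we have $W \not\subseteq C_j$, so $W$ meets the complement $N \setminus C_j$. I would look for a winning coalition inside $C_i \cap C_j$ for suitably chosen pairs, or more globally bound $g(z^\star)$ by a symmetrization/averaging argument: replace $z^\star$ by its "symmetrized" version and reduce to the extremal configuration in Example~\ref{example}, where two losing coalitions partition $N$ and the winning coalitions are the $n/2$ pairs. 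Fourth, in that reduced situation show directly that $g(z^\star) \le n/4$: if $z = \tfrac12(\chi_A + \chi_{N\setminus A})$ then $z = \tfrac12 \1$, and any family of winning coalitions packed under $\tfrac12\1$ has total weight at most $\tfrac12 |N| / (\min \text{winning size}) \le \tfrac12 n / 2 = n/4$ since every winning coalition has size $\ge 2$ (a singleton winning coalition would force $\alpha$ small anyway, handled separately).

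**Where the difficulty lies.** The genuine obstacle is the combinatorial step reducing an arbitrary optimal $z^\star \in L$ to the balanced two-coalition configuration: a priori the support of $z^\star$ can involve many losing coalitions in complicated intersection patterns, and winning coalitions need not have size $2$. I expect the real argument must show that, given the packing $\lambda$ certifying $g(z^\star)$, one can pair up or reorganize the mass so that each unit of packed winning weight "uses up" at least $4$ units of the budget $\sum_i z^\star_i = \sum_j y_j |C_j|$, while simultaneously bounding that budget by $n$. Concretely I would aim for an inequality of the form $4\,g(z^\star) \le \sum_{i \in N} z^\star_i \le n$, where the left inequality comes from the fact that every winning coalition, being not contained in any losing $C_j$, must "spill over" enough to cost a factor $4$ on average — this is exactly where the monotonicity and the $v(N)=1$, $v(\varnothing)=0$ conditions must be pushed. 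If the clean averaging bound $4 g(z^\star) \le \sum_i z^\star_i$ turns out to be false in general, the fallback is a more careful extreme-point analysis of $L$ (vertices are $\chi_C$, $C \in \Lo$) combined with complementary slackness to control the support size of an optimal $\lambda$, and then a direct case analysis on small supports.
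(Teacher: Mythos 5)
Your reduction is set up correctly: the inner maximum over $\Lo$ equals a maximum over $L:=\conv\{\chi_C : C\in\Lo\}$, the min--max swap is legitimate (it is an LP, or Sion with $L$ compact), and the dual of $\min_{p\in Q(\W)}\ip{p}{z}$ is indeed the fractional packing value $g(z)$ of winning coalitions under capacities $z$. But everything after that is a plan rather than a proof, and the one concrete inequality you propose for the "combinatorial heart," namely $4\,g(z^\star)\le\sum_i z^\star_i\le n$, is false --- already in the tight instance of Example~\ref{example}. There $z^\star=\tfrac12(\chi_{\{1,3,\dots,n-1\}}+\chi_{\{2,4,\dots,n\}})=\tfrac12\1\in L$, and putting weight $\tfrac12$ on each of the $n/2$ winning pairs is a feasible packing, so $g(z^\star)=n/4$ while $\sum_i z^\star_i=n/2$; your inequality would read $n\le n/2$. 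In the tight case the correct accounting is a factor $2$ (size-$2$ winning coalitions) against a budget of $n/2$, and in general neither factor is available on its own, so the trade-off you gesture at is exactly the unresolved difficulty. The symmetrization step is also unjustified: averaging $z^\star$ over permutations of the players leaves $L$ unless the game itself is symmetric, so there is no licensed reduction to the two-coalition configuration of Example~\ref{example}; likewise the dismissal of singleton winning coalitions ("forces $\alpha$ small anyway") is asserted, not argued.

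The gap can be closed on your own dual side by the blocker observation that drives the paper's proof. If $C$ is losing, then by monotonicity no winning coalition is contained in $C$, so $N\setminus C$ meets every winning coalition, i.e.\ $\chi_{N\setminus C}\in Q(\W)$. Hence every $z\in L$ satisfies $\1-z\in Q(\W)$ (convexity) and $z\in[0,1]^N$, so taking $p=\1-z$ as a feasible point gives $\min_{p\in Q(\W)}\ip{p}{z}\le\ip{\1-z}{z}=\sum_i z_i(1-z_i)\le n/4$; combined with your swap this yields $\alpha=\max_{z\in L}\min_{p\in Q(\W)}\ip{p}{z}\le n/4$, with no packing analysis needed. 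The paper itself proceeds slightly differently and avoids the minimax swap altogether: it rewrites losing coalitions as $\1-q$ for $0/1$ points $q\in Q(\W)$, relaxes to all $q\in Q(\W)$, and chooses $p^\star$ to be the minimum Euclidean-norm point of $Q(\W)$ (Remark~\ref{rem_min_norm}), so that the inner maximum of $\ip{p^\star}{\1-q}$ is attained at $q=p^\star$ and equals $\ip{p^\star}{\1-p^\star}=\tfrac n4-\ip{\tfrac12\1-p^\star}{\tfrac12\1-p^\star}\le\tfrac n4$ (Theorem~\ref{thm_strength}). Either route works; yours, repaired as above, is essentially the same idea transplanted to the dual side, but as submitted it does not constitute a proof.
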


In~\cite{HKKP18} the conjecture of Freixas and Kurz was verified for simple games with all minimal wining coalitions of size $3$ and for simple games with no minimal winning coalitions of size $3$. In~\cite{HKKP18} it was shown that $\alpha\leq 2n/7$ for general simple games.

Before going to the proof, we would like to say that our approach is inspired by the work of Ahmad Abdi, G\'{e}rard Cornu\'{e}jols and Dabeen Lee on identically self-blocking clutters~\cite{A18} (Section 3).

\section{Proof}

To prove the conjecture we reformulate, strengthen and only then verify it. A coalition~$C$, $C\subseteq N$ is called a \emph{cover} of $\W$ if $C$ has at least one common player with every coalition in $\W$. We call the collection of covers of $\W$ the \emph{blocker} of $\W$ and denote it by $b(\W)$\footnote{Usually, blocker is defined as the collection of minimal covers. Here, for simplicity of exposition we define blocker as the collection of all covers.} \cite{Edmonds70}. Due to the definition of simple games, we have
\[
\Lo=\{ N\setminus C\,:\, C\in b(\W)\}\,.
\]
Hence, the critical threshold value can be reformulated as follows
\begin{align*}
\alpha=\min_{p\in Q(\W)} \max_{L\in \Lo} p(L)=
 &\min_{p\in Q(\W)} \max_{C\in b(\W)} p(N\setminus C)=\\
&\min_{p\in Q(\W)} \max_{\substack{q\in Q(\W)\\q\in \{0,1\}^N}} \ip{p}{\1-q}\,.
\end{align*}
Here, $\ip{p}{q}$ stands for the scalar product of two vectors $p$ and $q$.

\begin{conj}[Reformulation of Conjecture of Freixas and Kurz]\label{conj_reform}
For a simple game with $n$ players and the collection of winning coalitions $\W$, we have
\[
\min_{p\in Q(\W)} \max_{\substack{q\in Q(\W)\\q\in \{0,1\}^N}} \ip{p}{\1-q}\leq n/4\,.
\]
\end{conj}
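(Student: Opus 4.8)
The plan is to relax the integrality constraint on the inner maximizer and show that the resulting continuous min-max is still bounded by $n/4$. Concretely, I would first argue that
\[
\min_{p\in Q(\W)} \max_{\substack{q\in Q(\W)\\ q\in\{0,1\}^N}} \ip{p}{\1-q}
\le
\min_{p\in Q(\W)} \max_{q\in Q(\W)} \ip{p}{\1-q},
\]
simply because the feasible region of the inner problem only grows when we drop $q\in\{0,1\}^N$; so it suffices to bound the right-hand side. This is the strengthening referred to in the text: we no longer need the losing coalitions to be genuine $0/1$ vectors, only fractional points of $Q(\W)$.

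Next I would try to evaluate or bound $\max_{q\in Q(\W)}\ip{p}{\1-q}$ for a cleverly chosen $p$. The natural candidate, mirroring the extremal Example~\ref{example}, is to look for a point $p$ that is ``balanced'' with respect to $\W$ — for instance the minimizer itself, or a scalar multiple of $\1$ when the game is symmetric. For general $p\in Q(\W)$, write $r:=\1-q$; then $\max_{q\in Q(\W)}\ip{p}{r}$ is a linear program in $q$ (equivalently in $r$), and I would invoke LP duality to rewrite it. Since $Q(\W)$ is defined by $x(C)\ge 1$ for $C\in\W$ and $x\ge\0$, the dual of $\max\{\ip{p}{\1-q} : q(C)\ge 1\ \forall C\in\W,\ q\ge\0\}$ involves nonnegative multipliers $y_C$ on the winning-coalition constraints, with the constraint that $\sum_{C\in\W} y_C \chi_C \le p$ coordinatewise (so that the $q$ terms are dominated), and objective $\ip{p}{\1}-\sum_C y_C$. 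Thus $\max_q \ip{p}{\1-q} = p(N) - \max\{\sum_C y_C : \sum_C y_C\chi_C \le p,\ y\ge\0\}$; in clutter language the subtracted quantity is a fractional matching value and $p(N)$ is a fractional cover-type quantity.

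The heart of the argument, then, is to exhibit a single $p\in Q(\W)$ for which $p(N)$ minus this fractional matching number is at most $n/4$. Here I expect to use a self-referential or fixed-point construction: choose $p$ so that $p$ is itself (a scaling of) an optimal fractional matching, i.e. $p=\sum_C y_C\chi_C$ with $\sum_C y_C = p(N)$, which forces the two terms to interact. If $p$ can be taken with all coordinates equal to some $t\in[0,1]$ then $p(N)=tn$, feasibility in $Q(\W)$ needs $t\cdot(\text{min winning size})\ge 1$, and the matching term is at least $tn/(\text{max winning size})$ — but winning sizes can be as large as $n$, so a pure averaging bound only gives something like $tn(1-1/(\text{size}))$, which is not automatically $\le n/4$. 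This is the main obstacle: controlling the trade-off when winning coalitions are large. I would handle it by splitting into cases according to whether $\W$ has a small winning coalition (say of size $\le n/2$, where a weighting concentrated on it does well) or all winning coalitions are large (size $>n/2$, where every pair of winning coalitions intersects, the blocker and its structure become very restricted, and one can bound $\alpha$ directly, perhaps reducing to the intersecting case analyzed in~\cite{HKKP18}). Assembling these cases — and in particular making the fixed-point choice of $p$ precise so that the duality bound closes at exactly $n/4$ — is where the real work lies; the rest is LP duality bookkeeping and the monotone-clutter identity $\Lo=\{N\setminus C : C\in b(\W)\}$ already recorded above.
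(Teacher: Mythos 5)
Your first step---dropping the integrality constraint on $q$ and reducing to the continuous bound $\min_{p\in Q(\W)}\max_{q\in Q(\W)}\ip{p}{\1-q}\le n/4$---is exactly the strengthening the paper proves, and it is correct. But from there the proposal does not close: the LP-duality reformulation $\max_q\ip{p}{\1-q}=p(N)-\max\{\sum_C y_C:\sum_C y_C\chi_C\le p,\ y\ge\0\}$ is fine as bookkeeping, yet the proposed case split (a winning coalition of size $\le n/2$ versus all winning coalitions large) is essentially the route that in~\cite{HKKP18} only yields $2n/7$, and you yourself flag that the averaging estimates do not automatically give $n/4$ and that ``the real work lies'' in making the fixed-point choice of $p$ precise. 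As written, this is a genuine gap: no candidate $p$ is exhibited for which the bound is actually verified.

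The missing idea is a specific, very clean choice of $p$: take $p^\star$ to be the minimum Euclidean norm point of $Q(\W)$, i.e.\ the optimizer of $\min\{\lVert p\rVert_2 : p\in Q(\W)\}$. First-order optimality for this convex program (the paper's Remark~\ref{rem_min_norm}) gives $\ip{p^\star}{q-p^\star}\ge 0$ for all $q\in Q(\W)$, so the inner maximum $\max_{q\in Q(\W)}\ip{p^\star}{\1-q}$ is attained at $q=p^\star$ itself---this is the rigorous version of your ``self-referential'' heuristic, with no case analysis and no dependence on coalition sizes. Then the elementary identity
\[
\ip{p^\star}{\1-p^\star}=\frac{n}{4}-\ip{\tfrac{1}{2}\1-p^\star}{\tfrac{1}{2}\1-p^\star}\le \frac{n}{4}
\]
finishes the proof in one line. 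So while your reduction and your instinct toward a fixed-point choice of $p$ point in the right direction, the argument you sketch for the main inequality would not succeed as stated, and the quadratic (minimum-norm) mechanism that makes the bound close at exactly $n/4$ is absent.
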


Next, we prove Theorem~\ref{thm_strength}, which is a strengthening of Conjecture~\ref{conj}. For the proof we need the following straightforward remark, which we leave as an exercise.

\begin{rem}\label{rem_min_norm}
Let $P$ be a polyhedron and let $p^\star$ be the optimal solution of the program $\min \{\left\lVert p \right\rVert_2\,:\, p\in P\}$. Then $p^\star$ is an optimal solution of the linear program $\min \{\ip{p^\star}{q}\,:\, q\in P\}$.
\end{rem}

\begin{thm}[Strengthening of Conjecture of Freixas and Kurz]\label{thm_strength}
For a simple game with $n$ players and the collection of winning coalitions $\W$, we have
\[
\min_{p\in Q(\W)} \max_{q\in Q(\W)} \ip{p}{\1-q}\leq n/4\,.
\]
In particular, if $p^\star$ is the optimal solution for the program
\[
\min \{\left\lVert p \right\rVert_2\,:\, p\in Q(\W)\}\,,
\]
then
\[
 \max_{q\in Q(\W)} \ip{p^\star}{\1-q}\leq n/4\,.
\]
\end{thm}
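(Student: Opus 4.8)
The plan is to use the Euclidean minimum-norm point $p^\star$ together with the variational characterization in Remark~\ref{rem_min_norm} to collapse the inner maximization into a quantity depending only on $p^\star$. First I would note that $p^\star$ is well defined: $Q(\W)$ is nonempty (it contains $\1$, since $|C|\ge 1$ for every $C\in\W$) and closed, and the objective $\lVert p\rVert_2$ is coercive, so the minimum is attained and $p^\star\in Q(\W)$. By Remark~\ref{rem_min_norm}, $p^\star$ is an optimal solution of $\min\{\ip{p^\star}{q}\,:\,q\in Q(\W)\}$; since $p^\star$ itself is feasible, the optimal value is $\ip{p^\star}{p^\star}=\lVert p^\star\rVert_2^2$, i.e.\ $\ip{p^\star}{q}\ge\lVert p^\star\rVert_2^2$ for every $q\in Q(\W)$.

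The second step is purely algebraic. Using the previous inequality,
\[
\max_{q\in Q(\W)}\ip{p^\star}{\1-q}=\ip{p^\star}{\1}-\min_{q\in Q(\W)}\ip{p^\star}{q}=\ip{p^\star}{\1}-\lVert p^\star\rVert_2^2=\sum_{i\in N}\bigl(p^\star_i-(p^\star_i)^2\bigr).
\]
Now I would invoke the elementary pointwise bound: for every real number $t$ one has $t-t^2=t(1-t)\le 1/4$, with equality exactly at $t=1/2$. Applying this coordinatewise to $p^\star_i$ and summing over the $n$ players yields
\[
\max_{q\in Q(\W)}\ip{p^\star}{\1-q}=\sum_{i\in N}\bigl(p^\star_i-(p^\star_i)^2\bigr)\le\frac{n}{4},
\]
which is the ``in particular'' assertion. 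The first inequality of the theorem then follows immediately, since $\min_{p\in Q(\W)}\max_{q\in Q(\W)}\ip{p}{\1-q}$ is at most the value of the inner maximum at the feasible point $p=p^\star$.

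I do not expect a genuine computational obstacle here — each step is a single line. The real content, and the thing one has to \emph{discover}, is the choice of test vector: taking $p$ to be the minimum $\ell_2$-norm point of $Q(\W)$ is precisely what activates Remark~\ref{rem_min_norm} and turns $\min_{q\in Q(\W)}\ip{p^\star}{q}$ into $\lVert p^\star\rVert_2^2$; after that the constant $n/4$ falls out of $t(1-t)\le 1/4$, and the tightness is transparent (equality demands $p^\star_i=1/2$ for all $i$, matching the extremal game of Example~\ref{example}). The only point worth a moment's care is that the inner supremum over the unbounded set $Q(\W)$ is actually finite and attained, but this is immediate once we know $\ip{p^\star}{q}$ is bounded below on $Q(\W)$.
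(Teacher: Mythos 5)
Your proposal is correct and follows essentially the same route as the paper: take the minimum $\ell_2$-norm point $p^\star$ of $Q(\W)$, use Remark~\ref{rem_min_norm} to conclude $\max_{q\in Q(\W)}\ip{p^\star}{\1-q}=\ip{p^\star}{\1-p^\star}$, and bound this by $n/4$ (your coordinatewise $t(1-t)\le 1/4$ is just the paper's completion of the square $\frac{n}{4}-\lVert\frac{1}{2}\1-p^\star\rVert_2^2$ written differently). The extra remarks on attainment of the minimum and finiteness of the inner maximum are harmless additions, not deviations.
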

\begin{proof}

Let us consider the unique optimal solution $p^\star$ for the program $\min \{\left\lVert p \right\rVert_2\,:\, p\in Q(\W)\}$. By Remark~\ref{rem_min_norm}, $p^\star$ is an optimal solution for the program $\min \{\ip{p^\star}{q}\,:\, q\in Q(\W)\}$. Thus, $p^\star$ is an optimal solution for the program $\max_{q\in Q(\W)} \ip{p^\star}{\1-q}$. Thus, we have
\[
\max_{q\in Q(\W)} \ip{p^\star}{\1-q}=\ip{p^\star}{\1-p^\star}=\frac{n}{4}-\ip{\frac{1}{2}\1-p^\star}{\frac{1}{2}\1-p^\star}\leq \frac{n}{4}\,,
\]
finishing the proof.
\end{proof}

To finish the note, let us discuss when Conjecture~\ref{conj} provides a tight upper bound for the critical threshold value. The next theorem shows that if the upper bound in Conjecture~\ref{conj} is tight, then this fact can be certified in the same way as in Example~\ref{example}.

\begin{thm}
For a simple game with $n$ players and the collection of winning coalitions $\W$ and the collection of losing coalitions $\Lo$, we have
\[
\alpha=\min_{p\in Q(\W)} \max_{L\in \Lo} p(L)= n/4
\]
if and only if $\frac{2}{n}\1$ lies in the convex hull of the characteristic vectors of winning coalitions and $\frac{1}{2}\1$ lies in the convex hull of the characteristic vectors of losing coalitions.
\end{thm}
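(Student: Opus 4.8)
The plan is to establish the two implications separately. The ``if'' direction is short and relies only on the bound $\alpha\le n/4$ already available from Theorem~\ref{thm_strength}. The ``only if'' direction reuses verbatim the argument of Theorem~\ref{thm_strength} to pin down the minimum‑norm point of $Q(\W)$, and then reads off the two convex combinations from an optimal dual solution of the linear program defining $\alpha$.

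\emph{The ``if'' direction.} Suppose $\frac{2}{n}\1=\sum_{C\in\W}\lambda_C\chi_C$ and $\frac{1}{2}\1=\sum_{L\in\Lo}\mu_L\chi_L$ with $\lambda,\mu\ge\0$, $\sum_C\lambda_C=\sum_L\mu_L=1$, where $\chi_S$ denotes the characteristic vector of $S$. Theorem~\ref{thm_strength} already gives $\alpha\le n/4$. For the matching lower bound, fix any $p\in Q(\W)$. Pairing $p$ with the first identity, $\tfrac{2}{n}\,p(N)=\ip{p}{\tfrac{2}{n}\1}=\sum_{C}\lambda_C\,p(C)\ge\sum_{C}\lambda_C=1$, since $p(C)\ge1$ for $C\in\W$; hence $p(N)\ge n/2$. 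Pairing $p$ with the second identity, $\max_{L\in\Lo}p(L)\ge\sum_{L}\mu_L\,p(L)=\ip{p}{\tfrac{1}{2}\1}=\tfrac{1}{2}p(N)\ge n/4$. As $p$ was arbitrary, $\alpha\ge n/4$, so $\alpha=n/4$.

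\emph{The ``only if'' direction: set‑up.} Assume $\alpha=n/4$ and let $p^\star$ be the minimum‑norm point of $Q(\W)$. Exactly as in the proof of Theorem~\ref{thm_strength},
\[
\alpha\ \le\ \max_{L\in\Lo}p^\star(L)\ \le\ \max_{q\in Q(\W)}\ip{p^\star}{\1-q}\ =\ \frac{n}{4}-\Bigl\lVert\tfrac{1}{2}\1-p^\star\Bigr\rVert_2^2\ \le\ \frac{n}{4}\ =\ \alpha,
\]
so all inequalities are equalities: $p^\star=\tfrac{1}{2}\1$ and $\max_{L\in\Lo}p^\star(L)=n/4$. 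In particular $(t,p)=(n/4,\tfrac12\1)$ is an optimal solution of the linear program
\[
\min\bigl\{\,t\ :\ p(L)\le t\ (L\in\Lo),\quad p(C)\ge 1\ (C\in\W),\quad p\ge\0\,\bigr\},
\]
whose optimal value is $\alpha=n/4$. Let $(y,z,w)$, with $y\in\R_{\ge0}^{\Lo}$, $z\in\R_{\ge0}^{\W}$, $w\in\R_{\ge0}^{N}$, be an optimal dual solution. Complementary slackness at $(n/4,\tfrac12\1)$ forces $w=\0$ (as $p^\star_i=\tfrac12>0$ for all $i$), $\sum_{L}y_L=1$, $\sum_{C}z_C=n/4$, $\sum_{C\ni i}z_C=\sum_{L\ni i}y_L$ for every $i\in N$, and moreover $z_C>0\Rightarrow|C|=2$ and $y_L>0\Rightarrow|L|=n/2$. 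Set $c:=\sum_{L\in\Lo}y_L\chi_L=\sum_{C\in\W}z_C\chi_C$. Then $c\in\conv(\chi_L:L\in\Lo)$, $\tfrac{4}{n}c\in\conv(\chi_C:C\in\W)$, and $\sum_{i\in N}c_i=\sum_{C}z_C|C|=2\sum_C z_C=n/2$.

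\emph{The crux: $c=\tfrac12\1$.} I expect this to be the main obstacle, and the idea is to squeeze $\lVert c\rVert_2^2$ between $n/4$ and $n/4$. Since $\sum_i c_i=n/2$, Cauchy--Schwarz gives $\lVert c\rVert_2^2\ge(n/2)^2/n=n/4$, with equality only when all coordinates of $c$ coincide, i.e.\ when $c=\tfrac12\1$. On the other hand,
\[
\lVert c\rVert_2^2\ =\ \Bigl\langle\sum_{L}y_L\chi_L,\ \sum_{C}z_C\chi_C\Bigr\rangle\ =\ \sum_{L\in\Lo}\sum_{C\in\W}y_L z_C\,|L\cap C|,
\]
and whenever $y_L z_C>0$ the coalition $C$ is winning with $|C|=2$ while $L$ is losing, so $|L\cap C|\le 1$: otherwise $C\subseteq L$ and $L$ would contain a winning coalition, hence be winning. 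Therefore $\lVert c\rVert_2^2\le\sum_{L,C}y_L z_C=\bigl(\sum_L y_L\bigr)\bigl(\sum_C z_C\bigr)=n/4$. Combining the two bounds, $\lVert c\rVert_2^2=n/4$ and $c=\tfrac12\1$. Substituting into the two memberships above yields $\tfrac12\1\in\conv(\chi_L:L\in\Lo)$ and $\tfrac2n\1\in\conv(\chi_C:C\in\W)$, completing the proof.
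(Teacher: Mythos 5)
Your proof is correct, and while the ``if'' direction and the identification of the minimum-norm point $p^\star=\tfrac12\1$ coincide with the paper, your ``only if'' argument takes a genuinely different route from there. The paper splits the conclusion into two separate arguments: it gets $\tfrac2n\1\in\conv\{\chi_C: C\in\W\}$ from complementary slackness applied to an optimal dual of $\min\{\ip{\tfrac12\1}{q}: q\in Q(\W)\}$, and then, for the losing side, runs the minimum-norm argument a second time on the integer hull $\conv\{r\in\{0,1\}^N: r\in Q(\W)\}$, forcing its minimum-norm point $q^\star$ to equal $\tfrac12\1$ and hence $\1-q^\star=\tfrac12\1\in\conv\{\chi_L: L\in\Lo\}$. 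You instead work with a single LP (the epigraph formulation of $\alpha$ with both families of constraints), take one optimal dual $(y,z,w)$, use complementary slackness at $(n/4,\tfrac12\1)$ to confine the supports to winning coalitions of size exactly $2$ and losing coalitions of size exactly $n/2$, and then pin down the common vector $c=\sum_L y_L\chi_L=\sum_C z_C\chi_C$ by squeezing $\lVert c\rVert_2^2$ between the Cauchy--Schwarz lower bound $n/4$ and the upper bound $\sum_{L,C}y_Lz_C\,\lvert L\cap C\rvert\le n/4$ coming from $\lvert L\cap C\rvert\le 1$ (a size-$2$ winning coalition cannot sit inside a losing one). The paper's route is shorter on the losing side because it recycles the minimum-norm machinery; yours is more self-contained there (no second pass through the blocker/integer-hull reformulation), produces both convex combinations from one dual certificate, and moreover shows they can be taken supported on size-$2$ winning and size-$n/2$ losing coalitions, mirroring Example~\ref{example}. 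One cosmetic point: the facts $\sum_L y_L=1$ and $\sum_C z_C=n/4$ follow from dual feasibility and strong duality rather than complementary slackness proper, though all of them are indeed properties of any optimal dual solution, so nothing is lost.
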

\begin{proof}
Clearly, if $\frac{2}{n}\1$ lies in the convex hull of the characteristic vectors of winning coalitions and $\frac{1}{2}\1$ lies in the convex hull of the characteristic vectors of losing coalitions, then for every $p\in Q(\W)$ we have
\[
\max_{L\in \Lo} p(L)\geq \ip{p}{\frac{1}{2}\1}= \frac{n}{4}\ip{p}{\frac{2}{n}\1}\geq \frac{n}{4}\,,
\]
showing that $\alpha\geq n/4$ and hence $\alpha= n/4$ by Theorem~\ref{thm_strength}.

On the other hand, from the proof of Theorem~\ref{thm_strength} we know that if $\alpha=n/4$ then $p^\star=\frac{1}{2}\1$ is an optimal solution for $\min \{\ip{p^\star}{q}\,:\, q\in Q(\W)\}$ with value $n/4$. Let us show that $\frac{2}{n}\1$ lies in the convex hull of the characteristic vectors of winning coalitions. To do that consider an optimal dual  solution $y^\star$ for the program $\min \{\ip{p^\star}{q}\,:\, q\in Q(\W)\}$. Using complementary slackness it is straightforward to show that $\frac{4}{n}y^\star$ provides coefficients of a convex combination of characteristic vectors of winning coalitions, where the convex  combination equals~$\frac{2}{n}\1$.

In the same way as the proof of Theorem~\ref{thm_strength}, we could show that
\[
\alpha\leq\max_{\substack{q\in Q(\W)\\q\in \{0,1\}^N}} \ip{q^\star}{\1-q}=\ip{q^\star}{\1-q^\star}=\frac{n}{4}-\ip{\frac{1}{2}\1-q^\star}{\frac{1}{2}\1-q^\star}\leq \frac{n}{4}\,,
\]
where $q^\star$ is the optimal solution for the program 
\[
\min \{\left\lVert q \right\rVert_2\,:\, q\in \conv\{r\in \{0,1\}^N\,:\,r\in Q(\W)\}\}\,.
\]
Thus, if $\alpha$ equals $n/4$, then $q^\star=\frac{1}{2}\1$ and $\frac{1}{2}\1$ lies in $\conv\{r\in \{0,1\}^N\,:\,r\in Q(\W)\}$. Hence, if $\alpha$ equals $n/4$, then $\1-q^\star=\frac{1}{2}\1$ lies in the convex hull of the characteristic vectors of losing coalitions, finishing the proof.
\end{proof}

\section{Open Questions}
The question about asymptotic behaviour of the critical threshold value of complete simple games remains open. These are the games with a total order of players by "winning power". Freixas and Kurz~\cite{FK14} conjectured that the critical threshold value of a complete simple game with $n$ players equals~$O(\sqrt n)$. Recently, in~\cite{HKKP18} it was shown that the critical threshold value of such games is $O\left((\ln n) \sqrt n\right)$.

\subsection*{Acknowledgements.} We would like to thank Ahmad Abdi  for helpful comments on the first version of this note. 

\bibliographystyle{amsplain}
\bibliography{bibliography}

\end{document}